\newenvironment{example}{\begin{alltt}}{\end{alltt}}
\newcommand{\bfx}{\mbox{\boldmath $x$}}
\newcommand{\bfy}{\mbox{\boldmath $y$}}
\newcommand{\bfC}{\mbox{\boldmath $C$}}
\newcommand{\bfR}{\mbox{\boldmath $R$}}
\newcommand{\bfmu}{\mbox{\boldmath $\mu$}}
\newcommand{\bfSi}{\mbox{\boldmath $\Sigma$}}
\newcommand{\bfze}{\mbox{\boldmath $0$}}
\newcommand{\bfxi}{\mbox{\boldmath $\xi$}}
\newcommand{\bfa}{\mbox{\boldmath $a$}}
\newcommand{\bfb}{\mbox{\boldmath $b$}}
\newcommand{\bfp}{\mbox{\boldmath $p$}}
\newcommand{\bfX}{\mbox{\boldmath $X$}}
\newcommand{\bfY}{\mbox{\boldmath $Y$}}
\newcommand{\bft}{\mbox{\boldmath $t$}}
\newcommand{\sbmu}{\mbox{\scriptsize \boldmath $\mu$}}
\newcommand{\sbsi}{\mbox{\scriptsize \boldmath $\Sigma$}}
\def\bfF{\mbox{$\boldsymbol{\mathbf{F}}$}}
\def\bfI{\mbox{$\boldsymbol{\mathbf{I}}$}}
\def\bfU{\mbox{$\boldsymbol{\mathbf{U}}$}}
\def\bfV{\mbox{$\boldsymbol{\mathbf{V}}$}}
\def\bfX{\mbox{$\boldsymbol{\mathbf{X}}$}}
\def\bfSigma{\mbox{$\boldsymbol{\Sigma}$}}
\def\bfOmega{\mbox{$\boldsymbol{\Omega}$}}
\newtheorem{proposition}{Proposition}[section]
\def\ci{\perp\!\!\!\perp} 
\begin{document}

\title{Moments Calculation For the Doubly Truncated Multivariate Normal Density}

\author{Manjunath B G \and Stefan Wilhelm}

\institute{
B. G. Manjunath \at 
CEAUL and DEIO, FCUL, University of Lisbon, Portugal\\
\email{bgmanjunath@gmail.com}
\and
Stefan Wilhelm \at
Department of Finance, University of Basel, Switzerland \\
Tel.: +49-172-3818512\\
\email{Stefan.Wilhelm@stud.unibas.ch}
}


\date{This version: 23.06.2012}

\maketitle

\begin{abstract}
In the present article we derive an explicit expression for the truncated mean and variance for the multivariate normal distribution 
with arbitrary rectangular double truncation. We use the moment generating approach of \cite{Tallis1961} and extend it to general $\bfmu$, $\bfSi$ and 
all combinations of truncation. As part of the solution we also give a formula for the bivariate marginal density of truncated multinormal variates.
We also prove an invariance property of some elements of the inverse covariance after truncation. 
Computer algorithms for computing the truncated mean, variance and the bivariate marginal probabilities for doubly truncated multivariate normal variates have been written in R and are presented along with three examples.
\end{abstract}

\keywords{multivariate normal; double truncation; moment generating function; bivariate marginal density function; graphical models; conditional independence}
\subclass{60E05 \and 62H05}

\pagebreak

\section{Introduction}

The multivariate normal distribution arises frequently and has a wide range of applications in fields like multivariate regression, Bayesian statistics or the analysis of Brownian motion. One motivation to deal with moments of the truncated multivariate normal distribution comes from the analysis of special financial derivatives (``auto-callables'' or ``Expresszertifikate'') in Germany. These products can expire early depending on some restrictions of the underlying trajectory, if the underlying is above or below certain call levels. In the framework of Brownian motion the finite-dimensional distributions for log returns at any $d$ points in time are multivariate normal. 
When some of the multinormal variates $\bfX=(x_1,\ldots,x_d)' \sim N(\bfmu,\bfSi)$ are subject to inequality constraints (e.g. $a_i \le x_i \le b_i$), this results in truncated multivariate normal distributions.

Several types of truncations and their moment calculation have been described so far, for example the one-sided rectangular 
truncation $\bfx \ge \bfa$ \citep{Tallis1961}, the rather unusual elliptical and radial truncations $\bfa \le \bfx' \bfR \bfx \le \bfb$ \citep{Tallis1963} and the plane truncation $\bfC \bfx \ge \bfp$ \citep{Tallis1965}. Linear constraints like $\bfa \le \bfC \bfx \le \bfb$ can often be reduced to rectangular truncation by transformation of the variables (in case of a full rank matrix $\bfC$ : $\bfa^{*} = \bfC^{-1} \bfa \le \bfx \le \bfC^{-1} \bfb = \bfb^{*}$),
which makes the double rectangular truncation $\bfa \le \bfx \le \bfb$ especially important.

The existing works on moment calculations differ in the number of variables they consider (univariate, bivariate, multivariate) 
and the types of rectangular truncation they allow (single vs. double truncation). 
Single or one-sided truncation can be either from above ($\bfx \le \bfa$) or below ($\bfx \ge \bfa$), but only on one side for all variables,
whereas double truncation $\bfa \le \bfx \le \bfb$ can have both lower and upper truncations points. Other distinguishing features of previous works are further limitations or restrictions they impose on the type of distribution (e.g. zero mean) and the methods they use to derive the results (e.g. direct integration or moment-generating function). Next, we will briefly outline the line of research.

\cite{Rosenbaum1961} gave an explicit formula for the moments of the bivariate case with single truncation from below in both variables by direct integration.
His results for the bivariate normal distribution have been extended by \cite{Shah1964}, \cite{Regier1971} and \cite{Muthen1990} to double truncation.

For the multivariate case, \cite{Tallis1961} derived an explicit expression for the first two moments in case of a singly truncated multivariate normal density with zero mean vector and the correlation matrix $\bfR$ using the moment generating function. \cite{Amemiya1974} and \cite{Lee1979} extended the \cite{Tallis1961} derivation to a general covariance matrix $\bfSi$ and also evaluated the relationship between the first two moments. 
\cite{Gupta1976} and \cite{Lee1983} gave very simple recursive relationships between moments of any order for the doubly truncated case. 
But since except for the mean there are fewer equations than parameters, these recurrent conditions do not uniquely identify moments of order $\ge 2$ and are therefore not sufficient for the computation of the variance and other higher order moments.\\ 
\par
Table \ref{tab:1} summarizes our survey of existing publications dealing with the computation of truncated moments and their limitations.
\begin{table}[tb]
\footnotesize
\caption{Survey of previous works on the moments for the truncated multivariate normal distribution}
\label{tab:1}
\begin{tabular}{p{3cm}lp{1cm}p{5cm}}
\hline\noalign{\smallskip}
 Author & \#Variates & Truncation & Focus \\ 
\noalign{\smallskip}\hline\noalign{\smallskip}
 \cite{Rosenbaum1961} & bivariate    & single & moments for bivariate normal variates with single truncation, $b_1 < y_1 < \infty, b_2 < y_2 < \infty$\\
 \cite{Tallis1961}    & multivariate & single & moments for multivariate normal variates with single truncation from below\\
 \cite{Shah1964}      & bivariate    & double & recurrence relations between moments \\
 \cite{Regier1971}    & bivariate    & double & an explicit formula only for the case of truncation from below at the same point in both variables\\
 \cite{Amemiya1974}   & multivariate & single & relationship between first and second moments \\
 \cite{Gupta1976}     & multivariate & double & recurrence relations between moments \\
 \cite{Lee1979}       & multivariate & single & recurrence relations between moments \\
 \cite{Lee1983}       & multivariate & double & recurrence relations between moments \\
 \cite{Leppard1989}   & multivariate & single & moments for multivariate normal distribution with single truncation\\
 \cite{Muthen1990}    & bivariate    & double & moments for bivariate normal distribution with double truncation, $b_1 < y_1 < a_1, b_2 < y_2 < a_2$ \\
 Manjunath/Wilhelm     & multivariate & double & moments for multivariate normal distribution with double truncation in all variables $\bfa \le \bfx \le \bfb$\\
\noalign{\smallskip}\hline
\end{tabular}
\end{table}
Even though the rectangular truncation $\bfa \le \bfx \le \bfb$ can be found in many situations, no explicit moment formulas for the truncated mean and variance in the general multivariate case of double truncation from below and/or above have been presented so far in the literature and are readily apparent. The contribution of this paper is to derive these formulas for the first two truncated moments and to extend and generalize existing results on moment calculations from especially \cite{Tallis1961,Lee1983,Leppard1989,Muthen1990}.\\
\par
The remainder of this paper is organized as follows. Section 2 presents the moment generating function (m.g.f) for the doubly truncated 
multivariate normal case. In Section 3 we derive the first and second moments by differentiating the m.g.f.
These results are completed in Section 4 by giving a formula for computing the bivariate marginal density. 
In Section 5 we present two numerical examples and compare our results with simulation results. Section 6 links our results to the
theory of graphical models and derives some properties of the inverse covariance matrix.
Finally, Section 7 summarizes our results and gives an outlook for further research. 

\section{Moment Generating Function}

The $d$--dimensional normal density with location parameter vector $\bfmu \in \mathbb{R}^d$
and non-singular covariance matrix $\bfSi$ is given by
\begin{eqnarray} \label{normal}
\varphi_{\sbmu,\sbsi}(\bfx) = \frac{1}{(2\pi)^{d/2} |\bfSi|^{1/2}} \exp\left\{ -\frac{1}{2} \left(\bfx - \bfmu \right)' \bfSi^{-1} \left(\bfx - \bfmu \right) \right\},   &&  \bfx \in \mathbb{R}^d.
\end{eqnarray}
The pertaining distribution function is denoted by $\Phi_{\sbmu,\sbsi} (\bfx)$. Correspondingly, the multivariate truncated normal density, truncated at $\bfa$ and $\bfb$, in $\mathbb{R}^d$, is defined as
\begin{eqnarray} \label{truct. normal}
\varphi{_{\alpha}}_{\sbmu,\sbsi}(\bfx) = \left\{ 
{\begin{array}{*{30}c}
\frac{\varphi_{\sbmu,\sbsi} (\bfx)}{ P\left\{\bfa \leq \bfX \leq \bfb \right\}},  &&  \mbox{  for  } \bfa \leq \bfx \leq \bfb,   \\ 
0, && \mbox{   otherwise. } 
\end{array}} \right.
\end{eqnarray}
Denote $\alpha = P\left\{\bfa \leq \bfX \leq \bfb \right\}$ as the fraction after truncation. 

The moment generating function (m.g.f) of a $d$--dimensional truncated random variable $\bfX$, truncated at $\bfa$ and $\bfb$, in $\mathbb{R}^d$, having the density $f(\bfx)$ is defined as the $d$--fold integral of the form
\begin{eqnarray*}
m(\bft)= E\left(e^{\bft'\bfX}\right) = \int^{\bfb}_{\bfa} e^{\bft'\bfx} f(\bfx) d\bfx.
\end{eqnarray*}

Therefore, the m.g.f for the density in \eqref {truct. normal} is 
\begin{eqnarray} \label{mgf.tn}
m(\bft) = \frac{1}{\alpha(2\pi)^{d/2} |\bfSi|^{1/2}} \int^{\bfb}_{\bfa} \exp\left\{ -\frac{1}{2}   \left[\left(\bfx - \bfmu \right)' \bfSi^{-1} \left(\bfx - \bfmu \right) - 2 \bft'\bfx\right] \right\}  d\bfx.
\end{eqnarray}
In the following, the moments are first derived for the special case $\bfmu = \bfze$. 
Later, the results will be generalized to all $\bfmu$ by applying a location transformation.

Now, consider only the exponent term  in \eqref{mgf.tn} for the case $\bfmu = \bfze$.  Then we have 
\begin{eqnarray*}
-\frac{1}{2}\left[ \bfx ' \bfSi^{-1} \bfx  - 2 \bft'\bfx\right]
\end{eqnarray*}
which can also be written as
\begin{eqnarray*}
\frac{1}{2} \bft'\bfSi \bft - \frac{1}{2} \left[ \left(\bfx  -\bfxi \right)' \bfSi^{-1} \left(\bfx  -\bfxi \right)  \right],
\end{eqnarray*}
where $\bfxi = \bfSi \bft$. \\

Consequently, the m.g.f of the rectangularly doubly truncated multivariate normal is
\begin{eqnarray}
m(\bft) = \frac{e^T}{\alpha(2\pi)^{d/2} |\bfSi|^{1/2}} \int^{\bfb}_{\bfa} \exp\left\{ -\frac{1}{2}   \left[\left(\bfx  -\bfxi \right)' \bfSi^{-1} \left(\bfx - \bfxi \right)  \right] \right\}  d\bfx,
\end{eqnarray}
where $T =  \frac{1}{2} \bft'\bfSi \bft $. \\

The above equation can be further reduced to 
\begin{eqnarray} \label{mgf}
m(\bft) = \frac{e^T}{\alpha(2\pi)^{d/2} |\bfSi|^{1/2}} \int^{\bfb -\bfxi }_{\bfa-\bfxi} \exp\left\{ -\frac{1}{2}    \bfx ' \bfSi^{-1} \bfx  \right\}  d\bfx.
\end{eqnarray}

For notational convenience, we write equation \eqref{mgf} as
\begin{eqnarray} \label{mgfr}
 m(\bft)  = e^T \Phi{_{\alpha}}_{\sbsi}
\end{eqnarray}
where 
\begin{eqnarray*}
 \Phi{_{\alpha}}_{\sbsi} = \frac{1}{\alpha(2\pi)^{d/2} |\bfSi|^{1/2}} \int^{\bfb -\bfxi }_{\bfa-\bfxi} \exp\left\{ -\frac{1}{2}    
\bfx ' \bfSi^{-1} \bfx  \right\}  d\bfx.
\end{eqnarray*}

\section{First And Second Moment Calculation}

In this section we derive the first and second moments of the rectangularly doubly truncated multivariate normal density by differentiating the m.g.f..

Consequently, by taking the partial derivative of (\ref {mgfr}) with respect to $t_i$ we have
\begin{eqnarray} \label {first}
\frac{\partial m(\bft) }{\partial t_i} =  e^T \frac{\partial \Phi{_{\alpha}}_{\sbsi} }{\partial t_i}  +
\Phi{_{\alpha}}_{\sbsi} \frac{\partial e^T }{\partial t_i}.
\end{eqnarray}

In the above equation the only essential terms which will be simplified are 
\begin{eqnarray*}
\frac{\partial e^T }{\partial t_i} =  e^T  \sum^{d}_{k=1} \sigma_{i,k} t_k
\end{eqnarray*}
and
\begin{eqnarray} \label{firstd}
\frac{\partial \Phi{_{\alpha}}_{\sbsi} }{\partial t_i} = \frac{\partial  }{\partial t_i} \int^{b^*_1}_{a^*_1} ... \int^{b^*_d}_{a^*_d} \varphi{_{\alpha}}_{\sbsi}(\bfx) dx_d ... dx_1,
\end{eqnarray}
where $a^*_i = a_i - \sum^{d}_{k=1} \sigma_{i,k} t_k $ and $b^*_i = b_i - \sum^{d}_{k=1} \sigma_{i,k} t_k $.  

Subsequently,  \eqref {firstd} is 
\begin{eqnarray} \label{c1}
\frac{\partial \Phi{_{\alpha}}_{\sbsi} }{\partial t_i} = \sum^{d}_{k=1} \sigma_{i,k} \left(F_k(a^*_k)-F_k(b^*_k)\right), 
\end{eqnarray}
where
\begin{multline}
F_i(x) = \\
         \int^{b^*_1}_{a^*_1}...\int^{b^*_{i-1}}_{a^*_{i-1}}\int^{b^*_{i+1}}_{a^*_{i+1}}...\int^{b^*_{d}}_{a^*_{d}} \varphi{_{\alpha}}_{\sbsi}(x_1,..,x_{i-1},x,x_{i+1},..x_d) dx_d ... dx_{i+1} dx_{i-1} ... dx_1.
\end{multline}         

Note that at $t_k=0$, for all $k=1,2,...,d$, we have $a^*_i = a_i$  and $b^*_i = b_i$. Therefore, $F_i(x)$ will be the $i$--th marginal density. An especially convenient way of computing these one-dimensional marginals is given in \cite{Cartinhour1990}.

From \eqref {first} -- \eqref {c1} for $k=1,2,...,d$ all $t_k = 0$.  Hence, the first moment is
\begin{eqnarray} \label{mean}
E(X_i)= \frac{\partial m(\bft) }{\partial t_i} |_{\bft  =  \bfze} = \sum^{d}_{k=1} \sigma_{i,k} \left(F_k(a_k)-F_k(b_k)\right).
\end{eqnarray}

Now, by taking the partial derivative of  \eqref {first}  with respect to $t_j$, we have
\begin{eqnarray} \label {second}
\frac{\partial^2 m(\bft) }{\partial t_j\partial t_i} =  e^T \frac{\partial^2 \Phi{_{\alpha}}_{\sbsi} }{\partial t_j \partial t_i}  + \frac{\partial \Phi{_{\alpha}}_{\sbsi} }{\partial t_i}  \frac{\partial e^T }{\partial t_j} +   \Phi{_{\alpha}}_{\sbsi} \frac{\partial^2 e^T }{\partial t_j\partial t_i} + \frac{\partial e^T }{\partial t_i} \frac{\partial \Phi{_{\alpha}}_{\sbsi} }{\partial t_j}.
\end{eqnarray}

The essential terms for simplification are
\begin{eqnarray*}
\frac{\partial^2 e^T }{\partial t_j\partial t_i} = \sigma_{i,j} 
\end{eqnarray*}

and clearly, the partial derivative of \eqref{c1} with respect to $t_j$ gives
\begin{eqnarray} \label{c2}
\frac{\partial^2 \Phi{_{\alpha}}_{\sbsi} }{ \partial t_j \partial t_i} = \sum^{d}_{k=1} \left(\sigma_{i,k} \frac{\partial F_k(a^*_k) }{\partial t_j}\right) - \sum^{d}_{k=1} \left(\sigma_{i,k} \frac{\partial F_k(b^*_k) }{\partial t_j}\right).
\end{eqnarray}

In the above equation merely consider the partial derivative of the marginal density $F_k(a^*_k)$ with respect to $t_j$. With further simplification it reduces to
\begin{eqnarray} 
\frac{\partial F_k(a^*_k) }{\partial t_j} &=&  \frac{\partial}{\partial t_j} \int^{b^*_1}_{a^*_1}...\int^{b^*_{k-1}}_{a^*_{k-1}}\int^{b^*_{k+1}}_{a^*_{k+1}}...\int^{b^*_{d}}_{a^*_{d}} \varphi{_{\alpha}}_{\sbsi}(x_1,..,x_{k-1},a^*_k,x_{k+1},..x_d) d\bfx_{-k} \nonumber \\ \label{diff.marg.}
&=&  \frac{ \sigma_{j,k} a^*_k F_k(a^*_k)}{\sigma_{k,k}} \nonumber \\
& &  + \sum_{q \neq k } \left( \sigma_{j,q} - \frac{\sigma_{k,q} \sigma_{j,k}}{\sigma_{k,k}}\right)\left( F_{k,q}(a^*_k,a^*_q) - F_{k,q}(a^*_k,b^*_q) \right), 
\end{eqnarray}
where
\begin{multline} 
\label{sec.marg.}
F_{k,q}(x,y) = \\ \int^{b^*_1}_{a^*_1}...\int^{b^*_{k-1}}_{a^*_{k-1}}\int^{b^*_{k+1}}_{a^*_{k+1}}...\int^{b^*_{q-1}}_{a^*_{q-1}}\int^{b^*_{q+1}}_{a^*_{q+1}}...\int^{b^*_{d}}_{a^*_{d}} \varphi{_{\alpha}}_{\sbsi}(x,y,\bfx_{-k,-q})  d\bfx_{-k, -q}, 
\end{multline}
and the short form $\bfx_{-k}$ denotes the vector $(x_1,..,x_{k-1},x_{k+1},..x_d)'$ in $(d-1)$--dimensions and $\bfx_{-k, -q}$ denotes the $(d-2)$--dimensional vector $(x_1,...,x_{k-1},\\x_{k+1},...,x_{q-1},x_{q+1},...,x_d)'$ for $k \neq q$.
The above equation \eqref {diff.marg.} is deduced from \cite{Lee1979}, pp.~167.  Note that for all $t_k = 0$ the term $F_{k,q}(x , y)$ will be the bivariate marginal density for which we will give a formula in the next section.

Subsequently, $\frac{\partial F_k(b^*_k) }{\partial t_j}$ can be obtained by substituting $a^*_k$ by $b^*_k$. 
From \eqref{second} -- \eqref {sec.marg.} at all $t_k = 0$, $k=1,2,...,d$, the second moment is
\begin{eqnarray} 
\label{sec.moment}
E(X_i X_j) &=& \frac{\partial^2 m(\bft) }{\partial t_j\partial t_i}  |_{\bft = \bfze} \nonumber\\
           &=& \sigma_{i,j} + \sum^{d}_{k=1} \sigma_{i,k}  \frac{ \sigma_{j,k} \left(a_k  F_k(a_k) - b_k F_k(b_k)\right)}{\sigma_{k,k}} \nonumber \\ 
           && + \sum^{d}_{k=1} \sigma_{i,k} \sum_{q \neq k } \left( \sigma_{j,q} - \frac{\sigma_{k,q} \sigma_{j,k}}{\sigma_{k,k}}\right) \left[\left( F_{k,q}(a_k,a_q) - F_{k,q}(a_k,b_q) \right) \right. \nonumber \\
&& \left.- \left(F_{k,q}(b_k,a_q) - F_{k,q}(b_k,b_q)\right)\right].
\end{eqnarray}

Having derived expressions for the first and second moments for double truncation in case of $\bfmu=\bfze$, we will now generalize to all $\bfmu$:\\
if $\bfY \sim N(\bfmu, \bfSi)$ with $\bfa^{*} \le \bfy \le \bfb^{*}$, then $\bfX = \bfY - \bfmu \sim N(\bfze, \bfSi)$ with $\bfa = \bfa^{*} - \bfmu \le \bfx \le \bfb^{*} - \bfmu = \bfb$ and $E(\bfY) = E(\bfX) + \bfmu$ and $Cov(\bfY) = Cov(\bfX)$. Equations \eqref{mean} and \eqref{sec.moment} can then be used to compute $E(\bfX)$ and $Cov(\bfX)$. Hence, for general $\bfmu$, the first moment is
\begin{eqnarray} \label{gmean}
E(Y_i) = \sum^{d}_{k=1} \sigma_{i,k} \left(F_k(a_k)-F_k(b_k)\right) + \mu_i.
\end{eqnarray}
The covariance matrix 
\begin{eqnarray} \label{gvariance}
 Cov(Y_i, Y_j) = Cov(X_i, X_j) = E(X_i X_j) - E(X_i) E(X_j)
\end{eqnarray}
is invariant to the shift in location.

The equations \eqref{gmean} and \eqref{gvariance} in combination with \eqref{mean} and \eqref{sec.moment} 
form our desired result and allow the calculation of the truncated mean and truncated variance for general double truncation.
A formula for the term $F_{k,q}(x_k,x_q)$, the bivariate marginal density, will be given in the next section.\\
We have implemented the moment calculation for mean vector \verb|mean|, covariance matrix \verb|sigma| and truncation vectors \verb|lower| and \verb|upper| as a function 
\begin{verbatim}
 mtmvnorm(mean, sigma, lower, upper)
\end{verbatim}
in the R package \verb|tmvtnorm| \citep{RJournal:Wilhelm+Manjunath:2010,Wilhelm2012}, 
where the code is open source. In Section 5 we show a usage example for this function.

\section{Bivariate Marginal Density Computation}

In order to compute the bivariate marginal density in this section we mainly follow \cite{Tallis1961}, p.~223 and \cite{Leppard1989} who implicitly used the bivariate marginal density as part of the moments calculation for single truncation, evaluated at the integration bounds. 
However, we extend it to the doubly truncated case and state the function for all points within the support region.

Without loss of generality we use a z-transformation for all variates $\bfx = (x_1,\ldots,x_d)'$ as well as 
for all lower and upper truncation points $\bfa = (a_1, \ldots, a_d)'$ and $\bfb = (b_1,\ldots,b_d)'$, resulting in a $N(0,\bfR)$ distribution with correlation matrix $\bfR$ for the standardized untruncated variates. In this section we treat all variables as if they are z-transformed, leaving the notation unchanged.

For computing the bivariate marginal density $F_{q,r}(x_q,x_r)$ with $a_q\le x_q \le b_q, a_r\le x_r \le b_r, q \ne r$, 
we use the fact that for truncated normal densities the conditional densities are truncated normal again. The following relationship holds for $x_s,z_s \in \mathbb{R}^{d-2}$ if we condition on $x_q = c_q$ and $x_r = c_r$ $(s \ne q \ne r)$:
\begin{eqnarray}
  \alpha^{-1} \varphi_d(x_s, x_q = c_q, x_r = c_r; \bfR) & = & \alpha^{-1} \varphi(c_q, c_r; \rho_{qr}) \varphi_{d-2}(z_s; \bfR_{qr}),
\end{eqnarray}
where
\begin{eqnarray}
  z_s & = & (x_s - \beta_{sq.r} c_q - \beta_{sr.q} c_r) / \sqrt{(1-\rho^2_{sq})(1-\rho^2_{sr.q})}
\end{eqnarray}
and $\bfR_{qr}$ is the matrix of second-order partial correlation coefficients for $s \ne q \ne r$. 
$\beta_{sq.r}$ and $\beta_{sr.q}$ are the partial regression coefficients of $x_s$ on $x_q$ and $x_r$ respectively
and $\rho_{sr.q}$ is the partial correlation coefficient between $x_s$ and $x_r$ for fixed $x_q$.\\
Integrating out $(d-2)$ variables $x_s$ leads to $F_{q,r}(x_q,x_r)$ 
as a product of a bivariate normal density $\varphi(x_q,x_r)$ and a $(d-2)$-dimension normal integral $\Phi_{d-2}$:
\begin{eqnarray}
F_{q,r}(x_q=c_q, x_r=c_r)      & = & \int^{b_1}_{a_1}...\int^{b_{q-1}}_{a_{q-1}}\int^{b_{q+1}}_{a_{q+1}}...\int^{b_{r-1}}_{a_{r-1}}     \nonumber \\
 &   & \int^{b_{r+1}}_{a_{r+1}}...\int^{b_{d}}_{a_{d}} \varphi{_{\alpha}}_{R}(x_s, c_q, c_r)  dx_s \nonumber \\
 & = & \alpha^{-1} \varphi(c_q, c_r; \rho_{qr}) \Phi_{d-2}(A^q_{rs}; B^q_{rs}; \bfR_{qr})
\end{eqnarray}
where $A^q_{rs}$ and $B^q_{rs}$ denote the lower and upper integration bounds of $\Phi_{d-2}$ given
$x_q=c_q$ and $x_r=c_r$:
\begin{eqnarray}
   A^q_{rs} & = & (a_s - \beta_{sq.r} c_q - \beta_{sr.q} c_r)/\sqrt{(1-\rho^2_{sq})(1-\rho^2_{sr.q})} \\
   B^q_{rs} & = & (b_s - \beta_{sq.r} c_q - \beta_{sr.q} c_r)/\sqrt{(1-\rho^2_{sq})(1-\rho^2_{sr.q})}.
\end{eqnarray}
The computation of $F_{q,r}(x_q,x_r)$ just needs the evaluation of the normal integral $\Phi_{d-2}$ in ${d-2}$ dimensions, which is readily available in most statistics software packages, for example as the function \verb|pmvnorm()| in the R package \verb|mvtnorm| \citep{Genz2012}.

The bivariate marginal density function 
\begin{example}
dtmvnorm(x, mean, sigma, lower, upper, margin=c(q,r))
\end{example} is also part of the R package \verb|tmvtnorm| \citep{RJournal:Wilhelm+Manjunath:2010,Wilhelm2012}, where readers can find the source code as well as help files and additional examples.

\section{Numerical Examples}

\subsection{Example 1}
We will use the following bivariate example with $\bfmu = (0.5, 0.5)'$ and covariance matrix $\bfSi$
\begin{eqnarray*}
  \bfSi & = & \left(
                  \begin{array}{cc}
                     1   & 1.2 \\
                     1.2 &   2
                  \end{array}
                \right)
\end{eqnarray*}
as well as lower and upper truncation points $\bfa=(-1, -\infty)',\bfb = (0.5, 1)'$, i.e. $x_1$ is doubly, while $x_2$ is singly truncated.
The bivariate marginal density $F_{q,r}(x,y)$ is the density function itself and is shown in figure \ref{fig:bivariate-density}, 
the one-dimensional densities $F_k(x)$ ($k=1,2$) in figure \ref{fig:marginal-densities}.
\begin{figure}
\centering
\includegraphics[width=0.7\textwidth]{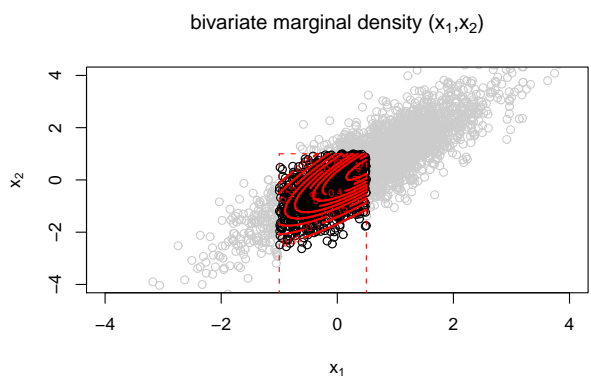}
\caption{Contour plot for the bivariate truncated density function}
\label{fig:bivariate-density}
\end{figure}

\begin{figure*}
\centering
\includegraphics[width=0.7\textwidth]{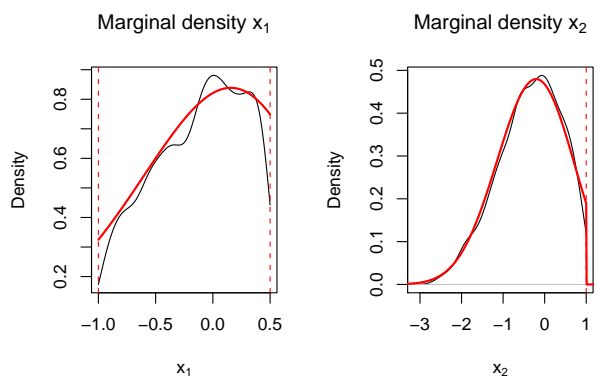}
\caption{Marginal densities $F_k(x)$ ($k=1,2$) for $x_1$ and $x_2$ obtained from Kernel density estimation of random samples and from direct evaluation of $F_k(x)$}
\label{fig:marginal-densities}
\end{figure*}
The moment calculation for our example can be performed in R as
\begin{example}
> library(tmvtnorm)
> mu    <- c(0.5, 0.5)
> sigma <- matrix(c(1, 1.2, 1.2, 2), 2, 2)
> a     <- c(-1, -Inf)
> b     <- c(0.5, 1)
> moments <- mtmvnorm(mean=mu, sigma=sigma, 
>            lower=a, upper=b) 
\end{example}
and results in $\bfmu^{*} = (-0.152, -0.388)'$ and covariance matrix
\begin{eqnarray*}
  \bfSi^{*} & = & \left(
                  \begin{array}{cc}
                     0.163   & 0.161 \\
                     0.161   & 0.606
                  \end{array}
                \right)
\end{eqnarray*}
The trace plots in figures \ref{fig:MC-Mean} and \ref{fig:MC-Cov} show the evolution of a Monte Carlo estimate for the elements of the mean vector and the covariance matrix respectively for growing sample sizes. Furthermore, the 95\% confidence interval obtained from Monte Carlo using the full sample of 10000 items is shown. All confidence intervals contain the true theoretical value, but Monte Carlo estimates still show substantial variation even with a sample size of 10000. Simulation from a truncated multivariate distribution
and calculating the sample mean or the sample covariance respectively also leads to consistent estimates of $\bfmu^{*}$ and $\bfSi^{*}$. Since the rate of convergence of the MC estimator is $O(\sqrt{n})$, one has to ensure sufficient Monte Carlo iterations in order to have a good approximation or to choose variance reduction techniques.
\begin{figure*}
\centering
\includegraphics[width=0.7\textwidth]{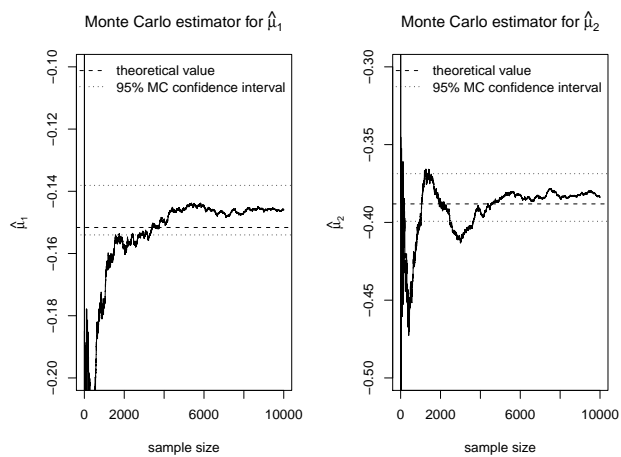}
\caption{Trace plots of the Monte Carlo estimator for $\bfmu^{*}$}
\label{fig:MC-Mean}
\end{figure*}

\begin{figure*}
\centering
\includegraphics[width=0.7\textwidth]{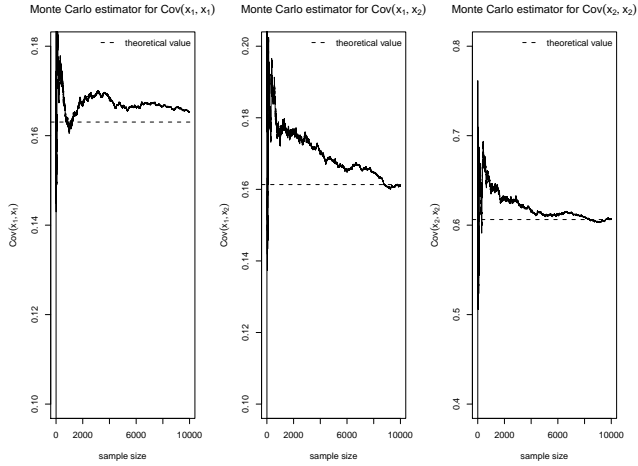}
\caption{Trace plots of the Monte Carlo estimator for the 3 elements of $\bfSi^{*}$ ($\sigma_{11}^{*}$, $\sigma_{12}^{*} = \sigma_{21}^{*}$ and $\sigma_{22}^{*}$)}
\label{fig:MC-Cov}
\end{figure*}

\subsection{Example 2}
Let $\bfmu = (0, 0, 0)'$,the covariance matrix
\begin{eqnarray*}
  \bfSi & = & \left(
                  \begin{array}{rrr}
                     1.1 &  1.2 &    0 \\
                     1.2 &    2 & -0.8 \\
                       0 & -0.8 &    3
                  \end{array}
                \right)
\end{eqnarray*}
and the lower and upper truncation points $\bfa=(-1, -\infty, -\infty)'$ and $\bfb = (0.5, \infty, \infty)'$, then the only truncated variable is $x_1$, 
which is furthermore uncorrelated with $x_3$.\\
Our formula results in $\bfmu^{*} = c(-0.210, -0.229, 0)'$ and
\begin{eqnarray*}
  \bfSi^{*} & = & \left(
                  \begin{array}{rrr}
                     0.174 & 0.190 &  0.0 \\
                     0.190 & 0.898 & -0.8 \\
                         0 & -0.8  &  3.0 
                  \end{array}
                \right)
\end{eqnarray*}
For this special case of only $k < d$ truncated variables $(x_1, \ldots, x_k)$, the remaining $d-k$ variables $(x_{k+1}, \ldots, x_d)$ can be regressed on the truncated variables, and a simple formula for the mean and covariance matrix can be given (see \cite{Johnson1971}, p.\,70).\\ 
Let the covariance matrix $\bfSi$ of $(x_1,\ldots, x_d)$ be partitioned as
\begin{eqnarray}
   \bfSi & = & 
     \left(
     \begin{array}{cc}
	     \bfV_{11}  & \bfV_{12} \\
       \bfV_{21}  & \bfV_{22}
    \end{array}
   \right)
\end{eqnarray}
where $\bfV_{11}$ denotes the $k \times k$ covariance matrix of $(x_1, \ldots, x_k)$. 
The mean vector\footnote{
The formula for the truncated mean given in \cite{Johnson1971}, p.\,70 is only valid for a zero-mean vector or after demeaning all variables appropriately. For non-zero means $\bfmu=(\bfmu_1, \bfmu_2)'$ it will be $(\boldsymbol{\xi'_1}, \bfmu_2 + (\boldsymbol{\xi'_1} - \bfmu_1) \bfV^{-1}_{11} \bfV_{12})$.} and the covariance matrix $\bfSigma^*$ of all $d$ variables can be computed as 
\begin{eqnarray}
  (\boldsymbol{\xi'_1}, \boldsymbol{\xi'_1} \bfV^{-1}_{11} \bfV_{12})
  \label{eqn:1}
\end{eqnarray}
and 
\begin{eqnarray}
   \bfSigma^* & = & 
   \left(
     \begin{array}{rr}
	     \bfU_{11}                            & \bfU_{11} \bfV^{-1}_{11} \bfV_{12} \\
       \bfV_{21} \bfV^{-1}_{11} \bfU_{11}   & \quad \bfV_{22} - \bfV_{21} (\bfV^{-1}_{11} - \bfV^{-1}_{11} \bfU_{11} \bfV^{-1}_{11}) \bfV_{12}
    \end{array}
   \right)
   \label{eqn:2}
\end{eqnarray}
where $\boldsymbol{\xi'_1}$ and $\bfU_{11}$ are the mean and covariance of the $(x_1, \ldots, x_k)$ after truncation.\\
The mean and standard deviation for the univariate truncated normal $x_1$ are
\begin{eqnarray*}
  \xi_1 = \mu_1^{*}    & = & \sigma_{11} \frac{\varphi_{\mu_1,\sigma_{11}}(a_1) - \varphi_{\mu_1,\sigma_{11}}(b_1)}{ \Phi_{\mu_1,\sigma_{11}}(b_1) - \Phi_{\mu_1,\sigma_{11}}(a_1)} \\
  \sigma_{11}^{*}      & = & \sigma_{11} + \sigma_{11} \frac{a_1 \varphi_{\mu_1,\sigma_{11}}(a_1) - b_1 \varphi_{\mu_1,\sigma_{11}}(b_1)}{\Phi_{\mu_1,\sigma_{11}}(b_1) - \Phi_{\mu_1,\sigma_{11}}(a_1)}
  \label{eqn:3}
\end{eqnarray*}
Letting $\bfU_{11} = \sigma_{11}^{*}$ and inserting $\xi_1$ and $\bfU_{11}$ into equations \eqref{eqn:1} and \eqref{eqn:2}, one can verify our formula and the results for $\bfmu^{*}$ and $\bfSi^{*}$. 
However, the crux in using the Johnson/Kotz formula 
is the need to first compute the moments of the truncated variables $(x_1, \ldots, x_k)$ for $k \ge 2$. But this has been exactly the subject of our paper.

\section{Moment Calculation and Conditional Independence}

In this section we establish a link between our moment calculation and the theory of graphical models (\cite{Whittaker1990}, \cite{Edwards1995} and \cite{Lauritzen1996}).
We present some properties of the inverse covariance matrix and show how the dependence structure of variables is affected after selection.\\
\par  
Graphical modelling uses graphical representations of variables as nodes in a graph and dependencies among them as edges. 
A key concept in graphical modelling is the conditional independence property. 
Two variables $X$ and $Y$ are conditional independent given a variable or a set of variables $Z$ (notation $X \ci Y | Z$),
when $X$ and $Y$ are independent after partialling out the effect of $Z$. 
For conditional independent $X$ and $Y$ the edge between them in the graph is omitted and 
the joint density factorizes as $f(x, y | z) = f(x | z) f(y | z)$.\\
\par
Conditional independence is equivalent to having zero elements $\bfOmega_{xy}$ 
in the inverse covariance matrix $\bfOmega = \bfSigma^{-1}$ as well as having
a zero partial covariance/correlation between $X$ and $Y$ given the remaining variables:
\[
 X \ci Y | \text{Rest}  \iff \bfOmega_{xy} = 0 \iff \rho_{xy.Rest} = 0
\]
Both marginal independence and conditional independence between variables 
simplify the computations of the truncated covariance in equation \eqref{sec.moment}. 
In the presence of conditional independence of $i$ and $j$ given $q$, the terms $\sigma_{ij} - \sigma_{iq} \sigma^{-1}_{qq} \sigma_{qj} = 0$ vanish as they reflect the partial covariance of $i$ and $j$ given $q$.
\par
As has been shown by \cite{Marchetti2008}, the conditional independence property is preserved after selection, i.e. the inverse covariance matrices 
$\bfOmega$ and $\bfOmega^*$ before and after truncation share the same zero-elements.\\
We prove that many elements of the precision matrix are invariant to truncation.
For the case of $k < d$ truncated variables, we define the set of truncated variables with $T=\{x_1, \ldots, x_k\}$,
and the remaining $d-k$ variables as $S=\{x_{k+1}, \ldots, x_d\}$. 
We can show that the off-diagonal elements $\bfOmega_{i,j}$ are invariant after truncation for $i \in T \cup S$ and $j \in S$: 
\begin{proposition}
The off-diagonal elements $\bfOmega_{i,j}$ and the diagonal elements $\bfOmega_{j,j}$ are invariant after truncation for $i \in T \cup S$ and $j \in S$.
\end{proposition}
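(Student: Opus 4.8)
The plan is to exploit the single feature that separates the untruncated variables $S$ from the truncated variables $T$: since the truncation acts only on $(x_1,\dots,x_k)$, it merely renormalizes the \emph{marginal} density of $X_T$ but leaves the \emph{conditional} law of $X_S$ given $X_T$ untouched. Writing the joint density as $f(x_T,x_S)=f(x_T)\,f(x_S\mid x_T)$, the doubly truncated density factorizes as $f^{*}(x_T,x_S)=f^{*}(x_T)\,f(x_S\mid x_T)$ with the \emph{same} conditional factor. Hence both the conditional covariance $\bfSi_{S|T}=\bfV_{22}-\bfV_{21}\bfV_{11}^{-1}\bfV_{12}$ and the regression matrix $\bfW=\bfV_{21}\bfV_{11}^{-1}$ of $X_S$ on $X_T$ are invariant under truncation, whereas only the marginal covariance $\bfV_{11}$ of $X_T$ is replaced by its truncated counterpart $\bfU_{11}$.

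First I would record the standard block--inversion identities that express the precision $\bfOmega=\bfSi^{-1}$, partitioned conformably with $T$ and $S$, through the conditional law:
\[
\bfOmega_{SS}=\bfSi_{S|T}^{-1},\qquad
\bfOmega_{TS}=-\bfW'\bfSi_{S|T}^{-1},\qquad
\bfOmega_{TT}=\bfV_{11}^{-1}+\bfW'\bfSi_{S|T}^{-1}\bfW .
\]
The decisive observation is that $\bfOmega_{SS}$ and $\bfOmega_{TS}=\bfOmega_{ST}'$ are assembled \emph{entirely} from the two invariant quantities $\bfSi_{S|T}$ and $\bfW$, while $\bfV_{11}$ enters only through the $T$--$T$ block $\bfOmega_{TT}$.

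Second I would run the identical computation on the truncated covariance. Using the Johnson--Kotz form \eqref{eqn:2} and the abbreviations above, I would rewrite $\bfSigma^{*}$ as
\begin{eqnarray*}
\bfSigma^{*} =
\left(
\begin{array}{cc}
\bfU_{11} & \bfU_{11}\bfW' \\
\bfW\bfU_{11} & \bfSi_{S|T}+\bfW\bfU_{11}\bfW'
\end{array}
\right),
\end{eqnarray*}
which is exactly the covariance of a joint law whose $T$--marginal has covariance $\bfU_{11}$ and whose $S\mid T$ conditional still has covariance $\bfSi_{S|T}$ and regression $\bfW$. Inverting this block matrix, the Schur complement of its $T$--block, $\bfSi_{S|T}+\bfW\bfU_{11}\bfW'-(\bfW\bfU_{11})\bfU_{11}^{-1}(\bfU_{11}\bfW')=\bfSi_{S|T}$, collapses to $\bfSi_{S|T}$ independently of $\bfU_{11}$. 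Therefore $\bfOmega^{*}_{SS}=\bfSi_{S|T}^{-1}=\bfOmega_{SS}$ and $\bfOmega^{*}_{TS}=-\bfW'\bfSi_{S|T}^{-1}=\bfOmega_{TS}$, so every entry $\bfOmega^{*}_{i,j}=\bfOmega_{i,j}$ with $j\in S$ is fixed (the whole $S$--block, including its diagonal, together with the $T$--$S$ cross block). Only $\bfOmega^{*}_{TT}=\bfU_{11}^{-1}+\bfW'\bfSi_{S|T}^{-1}\bfW$ differs from $\bfOmega_{TT}$, through the change $\bfU_{11}^{-1}-\bfV_{11}^{-1}$, which is precisely why the $T$--$T$ block is excluded from the statement.

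The step I expect to be the main obstacle is the algebraic cancellation inside the Schur complement: one must check that the $\bfW\bfU_{11}\bfW'$ contribution produced by the off-diagonal blocks of $\bfSigma^{*}$ exactly annihilates the $\bfW\bfU_{11}\bfW'$ term sitting in its $S$--block, so that the truncated conditional covariance reproduces the original $\bfSi_{S|T}$. Once this cancellation is in hand the invariance is immediate; the only remaining care is notational, namely verifying that the index ranges $i\in T\cup S,\ j\in S$ of the statement match exactly the $S$--block and the $T$--$S$ cross block shown to be fixed.
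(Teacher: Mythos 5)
Your proposal is correct and takes essentially the same route as the paper: both start from the Johnson/Kotz truncated covariance in \eqref{eqn:2}, invert it with the partitioned-inverse (Schur complement) formula, and observe that the $S$--$S$ and $T$--$S$ blocks of the precision matrix are unchanged, with only the $T$--$T$ block absorbing the replacement of $\bfV_{11}$ by $\bfU_{11}$. The only difference is presentational: your reparametrization through $\bfW=\bfV_{21}\bfV_{11}^{-1}$ and $\bfSi_{S|T}$ makes explicit the cancellation $\bfSi_{S|T}+\bfW\bfU_{11}\bfW'-\bfW\bfU_{11}\bfU_{11}^{-1}\bfU_{11}\bfW'=\bfSi_{S|T}$ that the paper asserts without computation when writing down \eqref{eqn:truncated-precision}.
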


\begin{proof}
The proof is a direct application of the Johnson/Kotz formula in equation \eqref{eqn:2} in the previous section. As a result of the formula for partitioned inverse matrices (\cite{Greene2003}, section A.5.3), the corresponding inverse covariance matrix $\bfOmega$
of the partitioned covariance matrix $\bfSigma$ is
\begin{eqnarray}
\bfOmega & = & 
\left(
\begin{array}{rr}
\bfV_{11}^{-1} ( \bfI + \bfV_{12} \bfF_2 \bfV_{21} \bfV_{11}^{-1} ) & \quad -\bfV_{11}^{-1} \bfV_{12} \bfF_2  \\
                          -\bfF_2 \bfV_{21} \bfV_{11}^{-1} &                      \bfF_2                               
\end{array}
\right)
\end{eqnarray}
with $\bfF_2 = (\bfV_{22} - \bfV_{21} \bfV_{11}^{-1} \bfV_{12})^{-1}$.\\
\par
Inverting the truncated covariance matrix $\bfSigma^*$ in equation \eqref{eqn:2} using the formula for the partitioned inverse leads to 
the truncated precision matrix
\begin{eqnarray}
\bfOmega^* = 
   \left(
     \begin{array}{rr}
	    \bfU_{11}^{-1} + \bfV_{11}^{-1} \bfV_{12} \bfF_2 \bfV_{21} \bfV_{11}^{-1}  & \quad -\bfV_{11}^{-1} \bfV_{12} \bfF_2 \\
      -\bfF_2 \bfV_{21} \bfV_{11}^{-1}                           & \quad \bfF_2
    \end{array}
   \right)
   \label{eqn:truncated-precision}
\end{eqnarray}
where the $\bfOmega^*_{12}$ and $\bfOmega^*_{21}$ elements are the same as $\bfOmega_{12}$ and $\bfOmega_{21}$ respectively.
The same is true for the elements in $\bfOmega^*_{22}$, especially the diagonal elements in $\bfOmega^*_{22}$.
\end{proof}
Here, we prove this invariance property only for a subset of truncated variables. 
Based on our experiments we conjecture that the same is true also for the case of full truncation (i.e. all off-diagonal elements in $\bfOmega^*_{11}$), but we do not give a rigorous proof here and leave it to future research.

\subsection{Example 3}
We illustrate the invariance of the elements of the inverse covariance matrix with the famous
mathematics marks example used in \cite{Whittaker1990} and \cite{Edwards1995}, p.\,49.
The independence graph of the five variables $(W, V, X,$ \allowbreak $Y, Z)$ in this example 
takes the form of a butterfly.\\

\begin{center}
\begin{pspicture}(6,2)
\psset{dotsize=6pt,fillstyle=none}

\psline{o-o}(0,0)(0,2)
\psline{o-o}(0,2)(3,1)
\psline{o-o}(0,0)(3,1)

\psline{o-o}(6,0)(6,2)
\psline{o-o}(6,0)(3,1)
\psline{o-o}(6,2)(3,1)

\rput(0,-0.3){mechanics (V)}
\rput(0,2.3){vectors (W)}
\rput(3,0.55){algebra (X)}
\rput(6,-0.3){analysis (Y)}
\rput(6,2.3){statistics (Z)}

\end{pspicture}
\end{center}

Here, we have the conditional independencies $(W, V) \ci (Y, Z) | X$. 
A corresponding precision matrix might look like (sample data; zero-elements marked as "."):
\begin{eqnarray}
\bfOmega & = &
\left(
\begin{array}{rrrrrr}
	    1 &  0.2 &  0.3 &   . & .   \\
    0.2 &    1 & -0.1 &   . & .   \\
    0.3 & -0.1 &    1 & 0.4 & 0.5 \\
      . &    . &  0.4 &   1 & 0.2 \\
      . &    . &  0.5 & 0.2 &   1 \\
\end{array}  
\right)
\end{eqnarray}

After truncation in some variables (for example $(W, V, X)$ as $-2 \le W \le 1$, $-1 \le V \le 1$, $0 \le X \le 1$),
we apply equation \eqref{sec.moment} to compute the truncated second moment and the inverse covariance matrix as:


\begin{eqnarray}
\bfOmega^* & = &
\left(
\begin{array}{rrrrrr}
	 1.88 &    0.2 &    0.3 &   . & .   \\
    0.2  &  3.45 &   -0.1 &   . & .   \\
    0.3  &  -0.1 &  12.67 & 0.4 & 0.5 \\
      .  &    .  &    0.4 &   1 & 0.2 \\
      .  &    .  &    0.5 & 0.2 &   1 \\
\end{array}  
\right)
\end{eqnarray}
The precision matrix $\bfOmega^*$ after selection differs from 
$\bfOmega$ only in the diagonal elements of $(W, V, X)$. 
From $\bfOmega^*$ we can read how partial correlations between variables 
have changed due to the selection process.
 
Each diagonal element $\bfOmega_{yy}$ of the precision matrix is the inverse of the
partial variance after regressing on all other variables (\cite{Whittaker1990},\allowbreak p.\,143).
Since only those diagonal elements in the precision matrix for the $k \le d$ 
of the truncated variables will change after selection, 
this leads to the idea to just compute these $k$ elements after selection rather 
than the full $k(k+1)/2$ symmetric elements in the truncated covariance matrix and 
applying the Johnson/Kotz formula for the remaining $d-k$ variables. 
However, the inverse partial variance of a scalar $Y$ given the remaining variables $X=\{x_1,\ldots,x_d\} \setminus {y}$
\begin{equation*}
  \bfOmega^*_{yy} = \left[ \Sigma^{*}_{y.X} \right]^{-1} = \left[ \Sigma^{*}_{yy} - \Sigma^{*}_{yX} \Sigma^{*-1}_{XX} \Sigma^{*}_{Xy} \right]^{-1}
\end{equation*}
still requires the truncated covariance results derived in Section 3.

\section{Summary}

In this paper we derived a formula for the first and second moments of the doubly truncated multivariate normal distribution and for their bivariate marginal density. An implementation for both formulas has been made 
available in the R statistics software as part of the \verb|tmvtnorm| package.
We linked our results to the theory of graphical models and 
proved an invariance property for elements of the precision matrix. 
Further research can deal with other types of truncation than we considered (e.g. elliptical).
Another line of research can look at the moments of the doubly truncated multivariate Student-t distribution, which contains the truncated multivariate normal distribution as a special case.

\bibliographystyle{spbasic}
\bibliography{StatPapers}

\begin{thebibliography}{22}
\providecommand{\natexlab}[1]{#1}
\providecommand{\url}[1]{{#1}}
\providecommand{\urlprefix}{URL }
\expandafter\ifx\csname urlstyle\endcsname\relax
  \providecommand{\doi}[1]{DOI~\discretionary{}{}{}#1}\else
  \providecommand{\doi}{DOI~\discretionary{}{}{}\begingroup
  \urlstyle{rm}\Url}\fi
\providecommand{\eprint}[2][]{\url{#2}}

\bibitem[{Amemiya(1974)}]{Amemiya1974}
Amemiya T (1974) Multivariate regression and simultaneous equations models when
  the dependent variables are truncated normal. Econometrica 42:999--1012

\bibitem[{Cartinhour(1990)}]{Cartinhour1990}
Cartinhour J (1990) One-dimensional marginal density functions of a truncated
  multivariate normal density function. Communications in Statistics - Theory
  and Methods 19:197--203

\bibitem[{Edwards(1995)}]{Edwards1995}
Edwards D (1995) Introduction to graphical modelling. Springer

\bibitem[{Genz et~al(2010)Genz, Bretz, Miwa, Mi, Leisch, Scheipl, and
  Hothorn}]{Genz2010}
Genz A, Bretz F, Miwa T, Mi X, Leisch F, Scheipl F, Hothorn T (2010) {mvtnorm}:
  Multivariate normal and t distributions.
  \urlprefix\url{http://CRAN.R-project.org/package=mvtnorm}, {R} package
  version 0.9-95

\bibitem[{Greene(2003)}]{Greene2003}
Greene WH (2003) Econometric Analysis, 5th edn. Prentice-Hall

\bibitem[{Gupta and Tracy(1976)}]{Gupta1976}
Gupta AK, Tracy DS (1976) Recurrence relations for the moments of truncated
  multinormal distribution. Communications in Statistics - Theory and Methods
  5(9):855--865

\bibitem[{Johnson and Kotz(1971)}]{Johnson1971}
Johnson NL, Kotz S (1971) Distributions in Statistics: Continuous Multivariate
  Distributions. John Wiley \& Sons

\bibitem[{Lauritzen(1996)}]{Lauritzen1996}
Lauritzen S (1996) Graphical Models. Oxford University Press

\bibitem[{Lee(1979)}]{Lee1979}
Lee LF (1979) On the first and second moments of the truncated multi-normal
  distribution and a simple estimator. Economics Letters 3:165--169

\bibitem[{Lee(1983)}]{Lee1983}
Lee LF (1983) The determination of moments of the doubly truncated multivariate
  tobit model. Economics Letters 11:245--250

\bibitem[{Leppard and Tallis(1989)}]{Leppard1989}
Leppard P, Tallis GM (1989) Algorithm {AS} 249: {E}valuation of the mean and
  covariance of the truncated multinormal distribution. Applied Statistics
  38:543--553

\bibitem[{Marchetti and Stanghellini(2008)}]{Marchetti2008}
Marchetti GM, Stanghellini E (2008) A note on distortions induced by truncation
  with applications to linear regression systems. Statistics \& Probability
  Letters 78:824–829

\bibitem[{Muth\'{e}n(1990)}]{Muthen1990}
Muth\'{e}n B (1990) Moments of the censored and truncated bivariate normal
  distribution. British Journal of Mathematical and Statistical Psychology
  43:131--143

\bibitem[{Regier and Hamdan(1971)}]{Regier1971}
Regier MH, Hamdan MA (1971) Correlation in a bivariate normal distribution with
  truncation in both variables. Australian Journal of Statistics 13:77--82

\bibitem[{Rosenbaum(1961)}]{Rosenbaum1961}
Rosenbaum S (1961) Moments of a truncated bivariate normal distribution.
  Journal of the Royal Statistical Society Series B (Methodological)
  23:405--408

\bibitem[{Shah and Parikh(1964)}]{Shah1964}
Shah SM, Parikh NT (1964) Moments of single and doubly truncated standard
  bivariate normal distribution. Vidya (Gujarat University) 7:82--91

\bibitem[{Tallis(1961)}]{Tallis1961}
Tallis GM (1961) The moment generating function of the truncated multinormal
  distribution. Journal of the Royal Statistical Society, Series B
  (Methodological) 23(1):223--229

\bibitem[{Tallis(1963)}]{Tallis1963}
Tallis GM (1963) Elliptical and radial truncation in normal populations. The
  Annals of Mathematical Statistics 34(3):940--944

\bibitem[{Tallis(1965)}]{Tallis1965}
Tallis GM (1965) Plane truncation in normal populations. Journal of the Royal
  Statistical Society, Series B (Methodological) 27(2):301--–307

\bibitem[{Whittaker(1990)}]{Whittaker1990}
Whittaker J (1990) Graphical models in applied multivariate statistics. John
  Wiley \& Sons

\bibitem[{Wilhelm and
  Manjunath(2010{\natexlab{a}})}]{RJournal:Wilhelm+Manjunath:2010}
Wilhelm S, Manjunath BG (2010{\natexlab{a}}) {tmvtnorm: A Package for the
  Truncated Multivariate Normal Distribution}. The R Journal 2(1):25--29,
  \urlprefix\url{http://journal.r-project.org/archive/2010-1/RJournal_2010-1_W%
ilhelm+Manjunath.pdf}

\bibitem[{Wilhelm and Manjunath(2010{\natexlab{b}})}]{Wilhelm2010}
Wilhelm S, Manjunath BG (2010{\natexlab{b}}) {tmvtnorm}: Truncated multivariate
  normal distribution and {S}tudent t distribution.
  \urlprefix\url{http://CRAN.R-project.org/package=tmvtnorm}, {R} package
  version 1.1-5

\end{thebibliography}

\end{document}